\newtheorem{theorem}{Theorem}
\newtheorem{lemma}[theorem]{Lemma}
\newtheorem{corollary}[theorem]{Corollary}
\title{Bayesian Conditional Tensor Factorizations for High-Dimensional Classification}
\author[Yun Yang {\it et al.}]{Yun Yang}
\address{Department of Statistical Science, Duke University,
         Durham,
         USA.}
\email{yy84@stat.duke.edu}
\author{David B. Dunson}
\address{Department of Statistical Science, Duke University,
         Durham,
         USA.}
\email{dunson@stat.duke.edu}
\begin{document}
\bibliographystyle{chicago}

\maketitle
\begin{abstract}
In many application areas, data are collected on a categorical response and high-dimensional
categorical predictors, with the goals being to build a parsimonious model for classification while doing inferences on the important predictors.
In settings such as genomics, there can be complex interactions among the predictors. By using a
carefully-structured Tucker factorization, we define a model that can characterize any
conditional probability, while facilitating variable selection and modeling of higher-order
interactions. Following a Bayesian approach, we propose a Markov chain Monte Carlo
algorithm for posterior computation accommodating uncertainty in the predictors to be
included. Under near sparsity assumptions, the posterior distribution for the conditional
probability is shown to achieve close to the parametric rate of contraction even in ultra
high-dimensional settings. The methods are illustrated using simulation examples
and biomedical applications.
\keywords{Classification; Convergence rate; Nonparametric Bayes;
Tensor factorization; Ultra high-dimensional; Variable selection.}
\end{abstract}

\section{Introduction}

Classification problems involving high-dimensional categorical predictors have become common in a variety of application areas, with the goals being not only to build an accurate classifier but also to identify a sparse subset of important predictors.  For example, genetic epidemiology studies commonly focus on relating a categorical disease phenotype to single nucleotide polymorphisms  encoding whether an individual has 0, 1 or 2 copies of the minor allele at a large number of loci across the genome.  In such applications, it is expected that interactions play an important role, but there is a lack of statistical methods for identifying important predictors that may act through both main effects and interactions from a high-dimensional set of candidates.  Our goal is to develop nonparametric Bayesian methods for addressing this gap.

There is a rich literature on methods for prediction and variable selection from high or ultra high-dimensional predictors with a categorical response.  The most common strategy would rely on logistic regression with the linear predictor having the form $x_i'\beta$, with $x_i = (x_{i1},\ldots,x_{ip})'$ denoting the predictors and $\beta = (\beta_1,\ldots,\beta_p)'$ regression coefficients.  In high-dimensional cases in which $p$ is the same order of $n$ or even $p>n$, classical methods such as maximum likelihood break down but there is a rich variety of alternatives ranging from penalized regression to Bayesian variable selection.  Popular methods include $L_1$ penalization \citep{Tibshirani1996} and the elastic net  \citep{Zou2005}, which combines $L_1$ and $L_2$ penalties to accommodate $p\gg n$ cases and allow simultaneous selection of correlated sets of predictors.  For efficient $L_1$ regularization in generalized linear models including logistic regression, \cite{Park2007} proposed a solution path method.  \cite{Alexander2007} propose a related Bayesian approach for high-dimensional logistic regression under Laplace priors.  \cite{Tong2009} applied $L_1$ penalized logistic regression to genome wide association studies.  Potentially, related methods can be applied to identify main effects and epistatic interactions  \citep{Yang2010}, but direct inclusion of interactions within a logistic model creates a daunting dimensionality problem limiting attention to low-order interactions and modest numbers of predictors.

These limitations have motivated a rich variety of nonparametric classifiers, including classification and regression trees (CART) \citep{Breiman1984} and random forests (RFs)  \citep{Breiman2001}. CART partitions the predictor space so that samples within the same partition set have relatively homogeneous outcomes. CART can capture complex interactions and has easy interpretation, but tends  to be unstable computationally and lead to low classification accuracy. RFs extend CART by creating a classifier consisting of a collection of trees that are all used to vote for classification. RFs can substantially reduce variance compared to a single tree and result in high classification accuracy, but provides an
 uninterpretable black box that does not yield insight into the relationship between specific predictors and the outcome.  Moreover, through our simulation results in section 6, we found that random forests did not behave well in high dimensional low signal-to-noise cases.

Our focus is on developing a new framework for nonparametric Bayes classification through tensor factorizations of the conditional probability $P(Y=y\, |\, X_1=x_1,\ldots, X_p=x_p)$, with $Y \in \{1,\ldots,d_0\}$ a categorical response and $X=(X_1,\ldots,X_p)'$ a vector of $p$ categorical predictors.  The conditional probability can be expressed as a $d_1 \times \cdots \times d_p$ tensor for each class label $y$, with $d_j$ denoting the number of levels of the $j$th categorical predictor $X_j$.  If $p=2$ we could use a low rank matrix factorization of the conditional probability, while in the general $p$ case we could consider a low rank tensor factorization. Such factorizations must be non-negative and constrained so that the conditional probabilities add to one for each possible $X$, and are fully flexible in characterizing the classification function for sufficiently high rank.  \cite{Dunson2009} and \cite{Anirban2011} applied two different tensor decomposition methods to model the joint probability distribution for multivariate categorical data. Although an estimate of the joint pmf can be used to induce an estimate of the conditional probability, there are clear advantages to bypassing the need to estimate the high-dimensional nuisance parameter corresponding to the marginal distribution of $X$.

We address such issues using a Bayesian approach that places a prior over the parameters in the factorization, and provide strong theoretical support for the approach while developing a tractable algorithm for posterior computation.  Some advantages of our approach include (i) fully flexible modeling of the conditional probability allowing any possible interactions while favoring a parsimonious characterization; (ii) variable selection; (iii) a full probabilistic characterization of uncertainty providing measures of uncertainty in variable selection and predictions; and (iv) strong theoretical support in terms of rates at which the full posterior distribution for the conditional probability {\em contracts} around the truth.  Notably, we are able to obtain near a parametric rate even in ultra high-dimensional settings in which the number of candidate predictors increases exponentially with sample size.  Such a result differs from frequentist convergence rates in characterizing concentration of the entire posterior distribution instead of simply a point estimate.  Similar contraction rate results in $p$ diverging with $n$ settings are currently only available in simple parametric models, such as the normal means problem \citep{Castillo2012} and generalized linear models \citep{Jiang2006}.  Although our computational algorithms do not yet scale to massive dimensions, we can accommodate $1,000$s of predictors.

\section{Conditional Tensor Factorizations}

\subsection{Tensor factorization of the conditional probability}
Although there is a rich literature on tensor decompositions, little is in statistics.  The focus has been on two factorizations that generalize matrix singular value decomposition (SVD). The most popular is parallel factor analysis (PARAFAC) \citep{Harshman1970, Harshman1994, Zhang2001}, which expresses a tensor as a sum of $r$ rank one tensors, with the minimal possible $r$ defined as the rank. The second approach is Tucker decomposition or higher-order singular value decomposition (HOSVD), which was proposed by \cite{Tucker1966} for three-way data and extended to arbitrary orders by \cite{DeLathauwer2000}. HOSVD expresses a $d_1 \times \cdots \times d_p$ tensor $A = \{ a_{c_1\cdots c_p} \}$ as
\begin{eqnarray}
a_{c_1\cdots c_p} = \sum_{h_1=1}^{d_1} \cdots \sum_{h_p=1}^{d_j} g_{h_1 \cdots h_p}\prod_{j=1}^p u_{h_jc_j}^{(j)}, \nonumber
\end{eqnarray}
where $G = \{ g_{h_1 \cdots h_p} \}$ is a core tensor, with constraints on $G$ such as low rank and sparsity imposed to induce better data compression and fewer components compared to PARAFAC. For probability tensors, we need nonnegative versions of such decompositions and the concept of rank changes accordingly \citep{Cohen1993}.

The conditional probability $P(Y=y|X_1=x_1,\ldots,X_p=x_p)$ can be structured as a $d_0 \times d_1\times \cdots \times d_p$ dimensional tensor. We will call such tensors {\em conditional probability tensors}. Let $\mathcal{P}_{d_1,\ldots,d_p}(d_0)$ denote the set of all conditional probability tensors, so that $P\in\mathcal{P}_{d_1,\ldots,d_p}(d_0)$ implies
\[
P(y|x_1,\ldots,x_p)\geq0\ \forall y, x_1,\ldots,x_p,\quad \sum_{y=1}^{d_0}P(y|x_1,\ldots,x_p)=1\ \forall x_1,\ldots,x_p.
\]
To ensure that $P$ is a valid conditional probability, the elements of the tensor must be non-negative with constraints on the first dimension for $Y.$  A primary goal is accommodating
high-dimensional covariates, with the overwhelming majority of cells in the table corresponding to unique combinations of $Y$ and $X$ unoccupied.  In such settings, it is necessary to
encourage borrowing information across cells while favoring sparsity.

Our proposed model for the conditional probability has the form:
      \begin{eqnarray}
        P(y|x_1,\ldots,x_p) &=& \sum_{h_1=1}^{k_1}\cdots\sum_{h_p=1}^{k_p}\lambda_{h_1h_2\ldots h_p}(y)\prod_{j=1}^p\pi_{h_j}^{(j)}(x_j),\label{eq:1}
      \end{eqnarray}
      with the parameters subject to
      \begin{eqnarray}\label{eq:2}
        &&\nonumber\sum_{c=1}^{d_0}\lambda_{h_1h_2\ldots h_p}(c)=1,\text{ for any possible combination of }(h_1,h_2,\ldots,h_p),\\
        &&\sum_{h=1}^{k_j}\pi_{h}^{(j)}(x_j)=1,\text{ for any possible pair of }(j,x_j).
      \end{eqnarray}
 The $k_j$ value controls the number of parameters used to characterize the impact of the $j$th predictor.  In the special case in which $k_j=1$, the $j$th predictor is excluded from the model, so sparsity can be imposed by setting $k_j=1$ for most $j$'s.  The representation (\ref{eq:1}) is many-to-one and the different parameters in the factorization cannot be uniquely identified. This does not present a barrier to our Bayesian approach and indeed over-parameterized models often have computational advantages in leading to simplified posterior computation and reduced autocorrelation in Markov chain Monte Carlo (MCMC) samples of parameters of interest, such as the induced predictive distribution (\cite{Bhattacharya2011}, \cite{Ghosh2009}).

We format the conditional probability $P(y|x_1,\ldots,x_p)$ as a $d_1\times\cdots\times d_p$ vector
\begin{eqnarray*}
Vec\{P(y|-)\}=\big\{P(y|1,\ldots,1,1),P(y|1,\ldots,1,2),\ldots,P(y|1,\ldots,1,d_p),
\ldots,\\
P(y|1,\ldots,d_{p-1},d_p),\ldots,P(y|d_1,\ldots,d_{p-1},d_p)\big\}'
\end{eqnarray*}
and $\lambda_{h_1,\ldots,h_p}(y)$ as a $k_1\times\cdots\times k_p$ vector
\begin{eqnarray*}
\lefteqn{ Vec\{\Lambda(y)\}=\big\{ \lambda_{1,\ldots,1,1}(y),\lambda_{1,\ldots,1,2}(y),\ldots, } \\
& & \qquad \qquad \quad \lambda_{1,\ldots,1,k_p}(y),
\ldots,\lambda_{1,\ldots,k_{p-1},k_p}(y),\ldots,\lambda_{k_1,\ldots,k_p}(y)\big\}'.
\end{eqnarray*}
Let $\pi^{(j)}$ be a $d_j\times k_j$ matrix with $\pi^{(j)}_v(u)$ as the $(u,v)$th element. It is a stochastic matrix, so rows sum to one, by constraint (\ref{eq:2}). Then representation (\ref{eq:1}) can be written in vector form:
\begin{equation}\label{eq:11}
Vec\{P(y|-)\}=\big(\pi^{(1)}\otimes\pi^{(2)}\otimes\cdots\otimes\pi^{(p)}\big)Vec\{\Lambda(y)\}, \text{ for }y=1,\ldots,d_0,
\end{equation}
where $\otimes$ denotes the Kronecker product. Furthermore, if we let $Mat(P)$ and $Mat(\Lambda)$ be two stochastic matrices with the $y$th column $Vec\{P(y|-)\}$ and $Vec\{\Lambda(y)\}$ respectively for $y=1,\ldots,d_0$, then we can write the above $d_0$ identities together as:
\[
Mat(P)=\big(\pi^{(1)}\otimes\pi^{(2)}\otimes\cdots\otimes\pi^{(p)}\big) Mat(\Lambda).
\]
The following theorem provides basic support for factorization (\ref{eq:1})-(\ref{eq:2}) through showing that any conditional probability has this representation. The proof of this theorem, which can be found in the appendix, sheds some light on the meaning of $k_1,\ldots,k_p$ and how it is related to a sparse structure of the tensor.

\begin{theorem}\label{thm:1}
Every $d_0\times d_1\times d_2\times\cdots\times d_p$ conditional probability tensor $P\in\mathcal{P}_{d_1,\ldots,d_p}(d_0)$ can be decomposed as (\ref{eq:1}),
      with $1\leq k_j\leq d_j$ for $j=1,\ldots,p$. Furthermore, $\lambda_{h_1h_2\ldots h_p}(y)$ and $\pi_{h_j}^{(j)}(x_j)$ can be chosen to be nonnegative and satisfy the constraints (\ref{eq:2}).
\end{theorem}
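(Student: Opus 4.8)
The plan is to give a direct constructive proof, exhibiting explicit nonnegative arrays $\{\pi^{(j)}\}$ and $\{\lambda\}$ that reproduce a prescribed $P$ and satisfy (\ref{eq:2}). The most transparent route is through the matrix form $Mat(P)=(\pi^{(1)}\otimes\cdots\otimes\pi^{(p)})Mat(\Lambda)$. Since this must hold for an \emph{arbitrary} conditional probability tensor, the first thing I would check is that the saturated choice already works: take $k_j=d_j$ and let each $\pi^{(j)}$ be the $d_j\times d_j$ identity matrix, whose rows are standard basis vectors and hence are trivially stochastic. Then the Kronecker product collapses to the identity and one is forced to set $Mat(\Lambda)=Mat(P)$, i.e.\ $\lambda_{h_1\ldots h_p}(y)=P(y|h_1,\ldots,h_p)$. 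This choice is manifestly nonnegative, the constraint $\sum_{c}\lambda_{h_1\ldots h_p}(c)=1$ is just the statement that $P(\cdot|h_1,\ldots,h_p)$ is a probability vector, and $\sum_{h}\pi_{h}^{(j)}(x_j)=1$ holds because each row of the identity sums to one. This alone proves existence with the upper bound $k_j=d_j$, while the lower bound $k_j\geq 1$ is vacuous.

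To extract the promised insight into the $k_j$ and their relation to sparse structure, I would then argue that $k_j$ need not be as large as $d_j$. Peeling off one mode at a time, I would write $P(y|x_1,\ldots,x_p)=\sum_{h_p=1}^{k_p}\pi_{h_p}^{(p)}(x_p)\,Q_{h_p}(y|x_1,\ldots,x_{p-1})$ and verify, using the first constraint in (\ref{eq:2}) together with nonnegativity, that each slice $Q_{h_p}$ is itself a valid $(p-1)$-mode conditional probability tensor; one then recurses. The key observation is that two levels $x_j,x_j'$ inducing the same conditional behavior can share a single mixture component, so the minimal admissible $k_j$ equals the number of distinct mode-$j$ profiles, equivalently the number of extreme points of the convex hull of the corresponding rows of the mode-$j$ unfolding, which is at most $d_j$. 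The degenerate value $k_j=1$ then corresponds exactly to $P$ not depending on $x_j$, i.e.\ to predictor $j$ being excluded, which is precisely the sparsity interpretation.

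The step I expect to be delicate is this reduction below $d_j$ while keeping all factors simultaneously stochastic. In an ordinary unconstrained Tucker/HOSVD the factor matrices may be taken orthogonal, but here every $\pi^{(j)}$ must be row-stochastic and every core slice a probability vector, so one cannot appeal to linear-algebraic rank alone; the relevant notion is a nonnegative, convex-combination rank. Carrying out the induction therefore requires the factor matrices $\pi^{(1)},\ldots,\pi^{(p-1)}$ to be \emph{shared} across all $k_p$ slices $Q_{h_p}$, with the $h_p$-dependence absorbed entirely into the core, and verifying that one common choice of factors serves every slice is the main obstacle. The saturated construction sidesteps this (the shared factors are all identities), so it already suffices for the existence statement of Theorem~\ref{thm:1}; the refinement is what makes the role of $k_j$ precise.
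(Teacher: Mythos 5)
Your proof is correct and follows essentially the same route as the paper: the saturated choice $\pi^{(j)}_{h}(x_j)=\delta_{h x_j}$ with $\lambda_{h_1\ldots h_p}(y)=P(y|h_1,\ldots,h_p)$ is exactly the existence argument in Appendix~A, and your mode-by-mode peeling is the paper's recursive construction with $k_j$ taken minimal at each step. (Only your side remark that the minimal $k_j$ \emph{equals} the number of extreme points of the rows' convex hull overstates things---the core slices need not themselves be rows of the unfolding, so that count is only an upper bound---but nothing in the theorem depends on it.)
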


We can simplify the representation through introducing $p$ latent class indicators $z_1,\ldots,z_p$ for $X_1,\ldots,X_p$, with $Y$ conditionally independent of $(X_1,\ldots,X_p)$ given
 $(z_1,\ldots,z_p)$. The model can be written as
\begin{eqnarray}
  Y_i|z_{i1},\ldots,z_{ip} & \sim & \text{Multinomial}\big(\{1,\ldots,d_0\},\lambda_{z_{i1},\ldots,z_{ip}}\big), \nonumber \\
  z_{ij}|X_j &\sim & \text{Multinomial}\big(\{1,\ldots,k_j\}, \pi_1^{(j)}(X_j),\ldots,\pi_{k_j}^{(j)}(X_j)\big), \label{eq:3}
\end{eqnarray}
where $\lambda_{z_{i1},\ldots,z_{ip}}=
\big\{ \lambda_{z_{i1},\ldots,z_{ip}}(1),\ldots,\lambda_{z_{i1},\ldots,z_{ip}}(d_0)\big\}$. Marginalizing out the latent class indicators, the conditional probability of $Y$ given $X_1,\ldots,X_p$ has the form in (\ref{eq:1}).

\subsection{Prior specification}
To complete a Bayesian specification of our model, we choose independent Dirichlet priors for the parameters $\Lambda=\{\lambda_{h_1,\ldots,h_p},h_j=1,\ldots,k_j,j=1,\ldots,p\}$ and $\pi=\{\pi_{h_j}^{(j)}(x_j), h_j=1,\ldots,k_j,x_j=1,\ldots,d_j,j=1,\ldots,p\}$,
\begin{eqnarray}
\big\{ \lambda_{h_1,\ldots,h_p}(1),\ldots,\lambda_{h_1,\ldots,h_p}(d_0)\big\} & \sim & \text{Diri}(1/d_0,\ldots,1/d_0), \nonumber \\
\big\{ \pi_{1}^{(j)}(x_j),\ldots,\pi_{k_j}^{(j)}(x_j)\big\} & \sim & \text{Diri}(1/k_j,\ldots,1/k_j), j=1,\ldots,p. \label{eq:prior}
\end{eqnarray}
These priors have the advantages of imposing non-negative and sum to one constraints, while leading to conditional conjugacy in posterior computation.  The hyperparameters in the Dirichlet priors are chosen to favor placing most of the probability on a few elements, inducing near sparsity in these vectors.

If $k_j=1$ in (\ref{eq:1}), by constraints (\ref{eq:2}) $\pi_{1}^{(j)}(x_j)= 1$, $P(y|x_1,\ldots,x_p)$ will not depend on $x_j$ and $Y \perp X_j | X_{j'}, j' \neq j.$  Hence, $I(k_j>1)$ are variable selection indicators.  In addition, $k_j$ can be interpreted as the number of latent classes for the $j$th covariate. Levels of $X_j$ are clustered according to their relations with the response variable in a soft probabilistic manner, with $k_1,\ldots,k_p$ controlling the complexity of the latent structure as well as sparsity.

To embody our prior belief that only a small number of $k_j$'s are greater than one, we let
\[
P(k_j=1)=1-\frac{r}{p},\ P(k_j=k)=\frac{r}{(d_j-1)p}, \text{ for }k=2,\ldots,d_j, j=1,\ldots,p,
\]
where $r$ is the expected number of predictors included. To further impose sparsity, we include a restriction that $\sharp\{j:k_j>1\}\leq \bar{r}$, where $\bar{r}$ is a prespecified maximum number of predictors.  We can choose the upper bound to correspond to twice the number of predictors we expect to be important, though in practice results tend to be robust to these hyperparameters unless $\bar{r}$ is chosen to be too small.  Default values of $r$ and $\bar{r}$ based on theoretical considerations are suggested in section 3.3.

The effective prior on the $k_j$'s is
\begin{eqnarray}
P(k_1=l_1,\ldots,k_p=l_p)=P(k_1=l_1)\cdots P(k_p=l_p)I_{\{\sharp\{j:l_j>1\}\leq \bar{r}\}}(l_1,\ldots,l_p), \label{eq:priork}
\end{eqnarray}
where $I_A(\cdot)$ is the indicator function for set $A$.  Let $\gamma = (\gamma_1,\ldots,\gamma_p)'$ be a vector having elements $\gamma_j = I(k_j>1)$ indicating inclusion of the $j$th predictor.  Under prior (\ref{eq:priork}) the induced prior for $\gamma$ is equivalent to the prior in \cite{Jiang2006}.
Potentially, we can put a more structured prior on the components in the conditional tensor factorization, including sparsity in $\Lambda$.  However, the theory shown in the next section provides strong support for prior (\ref{eq:prior})-(\ref{eq:priork}).

\section{Properties}

\subsection{Bias-variance trade off}

Because we are faced with extreme data sparsity in which the vast majority of combinations of $Y,X_1,\ldots,X_p$ are not observed, it is critical to impose sparsity assumptions.  Even if such assumptions do not hold, they have the effect of massively reducing the variance, making the problem tractable.  A sparse model that discards predictors having less impact and parameters having small values may still explain most of the variation in the data, resulting in a useful classifier that has good performance in terms of the bias-variance tradeoff even when sparsity assumptions are not satisfied.  We provide a simple illustrative example to demonstrate the tendency of our model to produce low MSE.

Suppose we have a binary response $Y$ and $p$ binary covariates $X_j \in \{-1,1\}$, $j=1,\ldots,p$.  The true model can be expressed in the form
\begin{equation}\label{eq:9}
P(Y=1|X_1=x_1,\ldots,X_p=x_p)=\frac{1}{2}+\frac{\beta}{2^2}x_1+\cdots+\frac{\beta}{2^{p+1}}x_p,\  \beta\in(0,1).
\end{equation}
The effect of $X_j$ on the response $Y$ decreases exponentially as $j$ increases from $1$ to $p$. A natural strategy is to estimate
$P(Y=1|X_1=x_1,\ldots,X_p=x_p)$ by the sample frequencies over the first $k$ covariates $\hat{P}(Y=1|X_1=x_1,\ldots,X_k=x_k)=\sharp\{i:y_i=1,x_{1i}=x_1,\ldots,x_{ki}=x_k\}/
\sharp\{i:x_{1i}=x_1,\ldots,x_{ki}=x_k\}$ and ignore the remaining $p-k$ covariates. Suppose we have $n=2^{l}$ ($k\leq l\ll p$) observations with one in each cell of combinations of $X_1,\ldots,X_l$.  Under (\ref{eq:9}), we can calculate the MSE of such an estimate as a function of $k$, and find a minimal MSE $k$ value.
\begin{eqnarray*}
\text{MSE}&=&\sum_{h_1,\ldots,h_p}E\big\{ P(Y=1|X_1=h_1,\ldots,X_p=h_p) - \\
& & \qquad \qquad \ \hat{P}(Y=1|X_1=h_1,\ldots,X_k=h_k)\big\}^2\\
&\triangleq&\text{Bias}^2+\text{Var}, \\
\text{Bias}^2&=&\sum_{h_1,\ldots,h_p}\big\{ P(Y=1|X_1=h_1,\ldots,X_p=h_p)- \\
& & \qquad \qquad  E\hat{P}(Y=1|X_1=h_1,\ldots,X_k=h_k)\big\}^2\\
&=&\beta^22^{k+1}\sum_{i=1}^{2^{p-k-1}}\bigg(\frac{2i-1}{2^{p+1}}\bigg)^2=\frac{\beta^2}{3}(2^{p-2k-2}-2^{-p-2}), \\
\text{Var}&=&\sum_{h_1,\ldots,h_p}\text{Var}\hat{P}(Y=1|X_1=h_1,\ldots,X_k=h_k)\\
&=&2^{p-k+1} \sum_{i=1}^{2^{k-1}}\frac{1}{2^l}\bigg(\frac{1}{2}+\frac{2i-1}{2^{k+1}}\beta\bigg)
\bigg(\frac{1}{2}-\frac{2i-1}{2^{k+1}}\beta\bigg)\\
&=&\frac{1}{3}\big\{ (3-\beta^2)2^{p+k-l-2}+\beta^22^{p-k-l-2}\big\}.
\end{eqnarray*}
Since there are $2^p$ cells, the average MSE for each cell equals
\[
\frac{1}{3}\big\{ (3-\beta^2)2^{k-l-2}+\beta^22^{-k-l-2}+\beta^22^{-2k-2}-\beta^22^{-2p-2}\big\}.
\]
From this we can see that $p$, the number of covariates, has little impact on the selection of $k$. Recall that $k\leq l$ and so the second term will be small comparing to the first and third terms. Hence, the average MSE obtains its minimum at $k\approx l/3=\log_2(n)/3$.  Even though the true model (\ref{eq:9}) is not sparse and all the predictors impact the conditional probability, the optimal number of predictors only depends on the log sample size, with the number of predictors playing almost no role.  This example also gives some intuition on the assumption of $r_n\sim\log(n)$ on the true model used in section 3.3.

\subsection{Borrowing of information}
A critical feature of our model is borrowing of information across cells corresponding to each combination of $X_1,\ldots,X_p$. Letting $w_{h_1,\ldots,h_p}(x_1,\ldots,x_p)= \prod_j \pi_{h_j}^{(j)}(x_j)$, model (\ref{eq:1}) is equivalent to
\begin{equation}
P(Y=y|X_1=x_1,\ldots,X_p=x_p)=\sum_{h_1,\ldots,h_p}w_{h_1,\ldots,h_p}(x_1,\ldots,x_p)\lambda_{h_1\dots h_p}(y), \nonumber
\end{equation}
and constraints (\ref{eq:2}) imply $\sum_{h_1,\ldots,h_p}w_{h_1,\ldots,h_p}(x_1,\ldots,x_p)=1$.
If $\lambda_{h_1\dots h_p}(y)$ is viewed as the frequency of $Y=y$ for the observations in cell $X_1=h_1,\ldots,X_p=h_p$, then our model essentially uses a kernel estimate that allows borrowing of information across cells via a weighted average of the cell frequencies.

To illustrate the strength of this, consider a simple example involving one covariate X with $m$ categories and a binary response. Let $P_j=P(Y=1|X=j)$ for $j=1,\ldots,m$. A naive estimate for $(P_1,\ldots,P_m)$ is sample frequencies $(k_1/n_1,\ldots,k_m/n_m)$, denoted by $(\hat{P}_1,\ldots, \hat{P}_m)$, where $k_j=\sharp\{i:y_i=1 \text{ and } x_i=j\}$ and $n_j=\sharp\{i:x_i=j\}$. Instead, we consider kernel estimates indexed by $0\leq c\leq1/(m-1)$
\[
\tilde{P}_j=\{ 1-(m-1)c\}\hat{P}_j+c\sum_{k\neq j}\hat{P}_k,\ j=1,\ldots,m.
\]
We use squared loss to compare these two estimators. After some calculations,
\begin{eqnarray*}
 &&E\{ L(\hat{P},P) \} =\sum_{j=1}^mE(\hat{P}_j-P_j)^2=\sum_{j=1}^m\frac{P_j(1-P_j)}{n_j},
\end{eqnarray*}
and $E\{ L(\tilde{P},P)\}=\sum_{j=1}^mE(\tilde{P}_j-P_j)^2$ is a function of $c$
obtaining minimum
\begin{eqnarray}
\lefteqn{ E\{ L(\hat{P},P)\} \bigg[ 1-\bigg(1-\frac{1}{m}\bigg)\frac{E\{ L(\hat{P},P) \} }{E\{ L(\hat{P},P)\}
+\frac{1}{m-1}\sum_{i<j}(P_i-P_j)^2}\bigg] } \nonumber \\
& & \qquad  \qquad \qquad \qquad \in\bigg(\frac{1}{m}E\{ L(\hat{P},P)\},E\{ L(\hat{P},P)\} \bigg), \nonumber
\end{eqnarray}
at
\[
c_0=\frac{1}{m}\frac{E\{ L(\hat{P},P) \} }{E\{ L(\hat{P},P)\}+\frac{1}{m-1}\sum_{i<j}(P_i-P_j)^2}\in\bigg(0,\frac{1}{m-1}\bigg).
\]
This suggests that when $P_j$'s are similar, the estimate $\tilde{P}$ can reduce the risk up to only $1/m$ the risk of estimating $\hat{P}$ separately. If $P_j$'s are not similar, $\tilde{P}$ can still reduce the risk considerably when the cell counts $\{ n_j \}$ are small.

\subsection{Posterior convergence rates}

Suppose we obtain data for $n$ observations $y^n = (y_1,\ldots,y_n)'$, which are conditionally independent given $X^n = (x_1,\ldots,x_n)'$ with
$x_i = (x_{i1},\ldots,x_{ip_n})'$, $x_{ij} \in \{1,\ldots,d\}$ and $p_n\gg n$.  We exclude the $n$ subscript on $p$ when convenient and assume $d_j=d$ for $j=1,\ldots,p$ for simplicity in exposition, though the results generalize directly.  Let $P_0$ denote the true data generating model, which can be dependent on $n$.  Rather than assume that most of the predictors have no impact on $Y$, we consider the situation similar to \cite{Jiang2006} that most have nonzero but very small influence. Specifically, parameterizing the true model $P_0$ in our tensor form with $k_j=d$ for $j=1,\ldots,p_n$ (this is always possible for any $P_0$), we assume:
\begin{quote}
{\bf Assumption A}. $\sum_{j=1}^{p_n}\max_{x_j}\sum_{h_j=2}^d\pi_{h_j}^{(j)}(x_j)<\infty$.
\end{quote}
This is a near sparsity restriction on $P_0$.  We additionally assume that the true conditional probabilities are strictly greater than zero,
\begin{quote}
{\bf Assumption B}. $P_0(y|x)\geq \epsilon_0$ for any $x,y$ for some $\epsilon_0>0$.
\end{quote}
The next theorem states the posterior contraction rate under our prior (\ref{eq:prior})-(\ref{eq:priork}).  We use  $f\prec g$ to mean that $f$ is less than $g$ up to a constant independent of $n$.  Recall that $r_n, \overline{r}_n$ are hyperparameters in the prior corresponding to the expected and maximum number of important predictors, respectively.
\begin{theorem}\label{thm:3}
Assume the design points $x_1,\ldots,x_n$ are independent observations from an unknown probability distribution $G_n$ on $\{1,\ldots,d\}^{p_n}$. Moreover, assume the prior is specified as in (\ref{eq:prior})-(\ref{eq:priork}) and Assumptions A and B hold.  Let $\epsilon_n$ be a sequence with $\epsilon_n\rightarrow0$, $n\epsilon_n^2\rightarrow\infty$ and $\sum_n\exp(-n\epsilon_n^2)<\infty$. Assume the following conditions hold: (i)
$\bar{r}_n\log p_n\prec n\epsilon_n^2$, (ii) $\bar{r}_nd^{\bar{r}_n}\log( \bar{r}_n / \epsilon_n) \prec n\epsilon_n^2$, (iii) $r_n/p_n\rightarrow0$ as $n \rightarrow\infty$, and (iv) there exists a sequence of models $\gamma_n$ with size $\bar{r}_n$ such that $\sum_{j\notin\gamma_n}\max_{x_j}\sum_{h_j=2}^d\pi_{h_j}^{(j)}(x_j)\prec\epsilon_n^2.$
Denote $d(P,P_0)=\int\sum_{y=1}^{d_0}\big|P(y|x_1,\ldots,x_p)-P_0(y|x_1,\ldots,x_p)\big|G_n(dx_1,\ldots,dx_p)$, then
\[
\Pi_n\big\{ P:d(P,P_0)\geq M\epsilon_n|y^n, X^n \big\}\rightarrow0\ a.s. P_0^n.
\]
\end{theorem}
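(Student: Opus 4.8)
The plan is to verify the three conditions of the general posterior contraction theorem for independent, non-identically distributed observations (Ghosal and van der Vaart, 2007), working first conditionally on the design $X^n$ and then transferring to the population metric $d$. Since $Y_i\mid x_i$ are conditionally independent with densities $P(\cdot\mid x_i)$, the relevant neighborhoods are measured by the average Kullback--Leibler divergence $K_n(P_0,P)=n^{-1}\sum_i\mathrm{KL}\{P_0(\cdot\mid x_i),P(\cdot\mid x_i)\}$ and its variation $V_n$, while the separating tests are built from covering numbers in the average Hellinger metric. Assumption B, which bounds every entry below by $\epsilon_0$, will be used repeatedly to pass between Kullback--Leibler, Hellinger and $L_1$ distances, since on this bounded-below regime they are mutually equivalent up to constants depending only on $\epsilon_0$. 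The summability hypothesis $\sum_n\exp(-n\epsilon_n^2)<\infty$ enters at the very end, through Borel--Cantelli, to upgrade $P_0^n$-in-probability convergence to the stated almost sure convergence.

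\textbf{The prior mass (Kullback--Leibler) condition} is the heart of the argument. First I would use Assumption A and condition (iv) to approximate $P_0$ by the reduced tensor $P^\ast$ obtained by setting $k_j=1$ for every $j\notin\gamma_n$, i.e.\ by collapsing each inactive predictor onto its first latent class. Coupling the latent indicators of $P_0$ and $P^\ast$ so that they agree on $\gamma_n$, the two response laws differ only when some discarded indicator is not $1$, whence for every $x$ the total variation is at most $\sum_{j\notin\gamma_n}\max_{x_j}\sum_{h_j\geq 2}\pi_{h_j}^{(j)}(x_j)\prec\epsilon_n^2$; by Assumption B this places $P^\ast$ inside the required $\epsilon_n^2$-neighborhood of $P_0$ in both $K_n$ and $V_n$. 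It then remains to lower bound the prior mass near the finitely many parameters of $P^\ast$. The hierarchical prior selects the structure ``$k_j>1$ on $\gamma_n$ and $k_j=1$ off $\gamma_n$'' with log-probability at least $-C(\bar r_n\log p_n+r_n)$, where the factor $(p_n-\bar r_n)\log(1-r_n/p_n)\asymp -r_n$ is controlled using $r_n/p_n\to0$ from condition (iii); and, conditionally on this structure, the independent Dirichlet priors assign mass $\exp\{-C\,\bar r_n d^{\bar r_n}\log(\bar r_n/\epsilon_n)\}$ to an $\epsilon_n$-ball about the parameters of $P^\ast$, the exponent being governed by the dimension $\asymp d^{\bar r_n}$ of the core array $\Lambda$. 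Multiplying these two factors, conditions (i) and (ii) give $\Pi_n(\text{KL-ball})\geq\exp(-Cn\epsilon_n^2)$.

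\textbf{The sieve and entropy condition.} Because the prior support already enforces $\sharp\{j:k_j>1\}\leq\bar r_n$, I would take the sieve $\mathcal P_n$ to be the whole support, so that the mass-outside-the-sieve requirement is vacuous. The metric entropy of $\mathcal P_n$ then splits into a combinatorial part --- choosing at most $\bar r_n$ active predictors out of $p_n$ together with their levels $k_j\leq d$ --- contributing $\log\{\binom{p_n}{\bar r_n}(d-1)^{\bar r_n}\}\prec\bar r_n\log p_n$, and a continuous part --- covering the product of simplices of total dimension $\asymp d^{\bar r_n}$ to $L_1$-precision $\epsilon_n$ --- contributing $\prec\bar r_n d^{\bar r_n}\log(\bar r_n/\epsilon_n)$. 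Conditions (i) and (ii) render both $\prec n\epsilon_n^2$, which is exactly the entropy bound needed to build exponentially consistent tests separating $P_0$ from $\{P:d(P,P_0)\geq M\epsilon_n\}$.

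\textbf{Conclusion and transfer to $d$.} Feeding the prior-mass and entropy bounds into the general theorem yields contraction at rate $\epsilon_n$ in the empirical average Hellinger, hence (Assumption B) in the empirical distance $d_n(P,P_0)=n^{-1}\sum_i\sum_y|P(y\mid x_i)-P_0(y\mid x_i)|$. The remaining task is to replace $d_n$ by the population distance $d$; since $x_1,\ldots,x_n$ are i.i.d.\ from $G_n$, a uniform law of large numbers over the $\epsilon_n$-sieve, quantified by the same entropy bound so that the deviation is itself $O(\epsilon_n)$ outside an event of exponentially small probability, effects this transfer, after which Borel--Cantelli with $\sum_n\exp(-n\epsilon_n^2)<\infty$ delivers the almost sure statement. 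I expect the main obstacle to be the prior-mass step: simultaneously keeping the approximation bias of the discarded predictors below $\epsilon_n^2$ through the near-sparsity Assumption A, and balancing the combinatorial prior over \emph{which} predictors enter against the Dirichlet mass on the $d^{\bar r_n}$-dimensional core $\Lambda$, so that their product still exceeds $\exp(-n\epsilon_n^2)$.
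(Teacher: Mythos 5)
Your proposal is correct in substance and reproduces all the key quantitative steps of the paper's argument: the sieve is the full prior support (so the mass-outside-sieve condition is vacuous), the entropy splits into a combinatorial part $\prec\bar r_n\log p_n$ and a continuous part $\prec\bar r_n d^{\bar r_n}\log(\bar r_n/\epsilon_n)$, and the prior-mass bound multiplies the structure probability $\exp\{-O(\bar r_n\log p_n)\}$ by the Dirichlet mass $\exp\{-O(\bar r_n d^{\bar r_n}\log(\bar r_n/\epsilon_n))\}$ after approximating $P_0$ by the reduced tensor via Assumption A and condition (iv). The route differs from the paper's in two respects. First, the paper does not condition on the design: it applies a modification of the i.i.d.\ theorem of Ghosal, Ghosh and van der Vaart (2000) to the \emph{joint} law of $(y_i,x_i)$ with the marginal of $X$ fixed at $G_n$, so that the total variation distance on the joint is identically the population distance $d$ and no empirical-to-population transfer is needed; the KL neighborhood is taken in the sup norm $\|\log(P/P_0)\|_\infty<\epsilon_n^2$, which Assumption B reduces to $\|P-P_0\|_\infty<\epsilon_0\epsilon_n^2$ and which the paper's Lemmas 2 and 3 bound directly. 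Your non-i.i.d.\ conditional framework with average KL/Hellinger neighborhoods is a legitimate alternative, but it makes the final transfer from the empirical distance $d_n$ to $d$ a genuine additional step; your sketch of it (uniform LLN over an $\epsilon_n$-net of the sieve, with the same entropy bound controlling the union bound and Hoeffding supplying $\exp(-cn\epsilon_n^2)$ deviations for the $[0,2]$-bounded integrand) is workable but would need to be written out, with the deviation taken as $M'\epsilon_n$ for $M'$ large enough that the union bound closes. Second, your coupling of the latent indicators $z_j$ on $\gamma_n$, bounding the total variation between $P_0$ and the reduced model by the probability that some discarded $z_j\neq1$, is a clean probabilistic substitute for the paper's Lemma 3, which obtains the same bound $\sum_{j\notin\gamma_n}\max_{x_j}\sum_{h_j\geq2}\pi^{(j)}_{h_j}(x_j)$ by direct algebraic expansion. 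Net assessment: the paper's joint-distribution device buys a shorter proof by eliminating the transfer step; your version is closer to off-the-shelf published theorems but leaves that one step only sketched.
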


The following corollary tells us that the posterior convergence rate of our model can be very close to $n^{-1/2}$ for appropriate hyperparameter choices.
\begin{corollary}
For any $\alpha\in(0,1)$, $\epsilon_n=n^{-(1-\alpha)/2}\log n$ will satisfy the conditions in Theorem \ref{thm:3} if
$r_n\prec\bar{r}_n\prec\log n$, $p_n\prec\exp(n^{\alpha})$ and there exists a sequence of models $\gamma_n$ with size $\bar{r}_n$ such that $\sum_{j\notin\gamma_n}\max_{x_j}\sum_{h_j=2}^d\pi_{h_j}^{(j)}(x_j)\prec n^{\alpha-1}\log^2n$.
\end{corollary}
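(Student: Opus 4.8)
The plan is to treat this corollary as a direct verification that the stated choice $\epsilon_n = n^{-(1-\alpha)/2}\log n$, together with the hypotheses $r_n\prec\bar{r}_n\prec\log n$, $p_n\prec\exp(n^\alpha)$ and the near-sparsity bound in the third hypothesis, forces each of the regularity requirements and conditions (i)--(iv) of Theorem \ref{thm:3} to hold. First I would record the two elementary quantities $\epsilon_n^2 = n^{\alpha-1}\log^2 n$ and $n\epsilon_n^2 = n^{\alpha}\log^2 n$. Since $\alpha\in(0,1)$ gives $(1-\alpha)/2>0$, the polynomial decay dominates the logarithm, so $\epsilon_n\to0$, while $n\epsilon_n^2 = n^\alpha\log^2 n\to\infty$. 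For summability I would note that $n^\alpha\log^2 n\geq 2\log n$ for all large $n$, whence $\exp(-n\epsilon_n^2)\leq n^{-2}$ eventually and $\sum_n\exp(-n\epsilon_n^2)<\infty$; this settles the three preliminary conditions on $\epsilon_n$.

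Next I would dispatch the conditions that are essentially immediate. For (i), the bound $p_n\prec\exp(n^\alpha)$ gives $\log p_n\prec n^\alpha$, so $\bar{r}_n\log p_n\prec \log n\cdot n^\alpha = n^\alpha\log n\prec n^\alpha\log^2 n = n\epsilon_n^2$. Condition (iii) follows from the standing assumption $p_n\gg n$ of Section 3.3, since $r_n\prec\bar{r}_n\prec\log n\ll n\ll p_n$ yields $r_n/p_n\to0$. Condition (iv) is nothing but the third hypothesis of the corollary after substituting $\epsilon_n^2 = n^{\alpha-1}\log^2 n$, so it holds by assumption for the exhibited sequence of models $\gamma_n$.

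The one step requiring care, and the main obstacle, is condition (ii), on account of the factor $d^{\bar{r}_n}$, which grows geometrically in $\bar{r}_n$. Writing $\bar{r}_n\leq c\log n$, one has $d^{\bar{r}_n}\leq n^{c\log d}$, so the relation $\bar{r}_n\prec\log n$ must be read with a constant $c$ small enough that $c\log d\leq\alpha$; this is precisely why $\bar{r}_n$ is permitted to grow only like $\log n$ and no faster. With such a $c$ fixed I would bound the logarithmic factor by observing $\bar{r}_n/\epsilon_n\prec n^{(1-\alpha)/2}$, hence $\log(\bar{r}_n/\epsilon_n)\prec\log n$, and then combine the three pieces to obtain
\[
\bar{r}_n\, d^{\bar{r}_n}\,\log(\bar{r}_n/\epsilon_n)\prec \log n\cdot n^{c\log d}\cdot\log n\prec n^{\alpha}\log^2 n = n\epsilon_n^2,
\]
which is exactly (ii). I expect the only genuinely delicate point in the whole argument to be making explicit that the constant hidden in $\bar{r}_n\prec\log n$ depends on $d$ and $\alpha$ through the requirement $c\log d\leq\alpha$; once this dependence is acknowledged, the remaining inequalities are routine order-of-magnitude comparisons, and the corollary follows by invoking Theorem \ref{thm:3}.
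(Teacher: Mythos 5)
Your verification is correct, and it is essentially the only reasonable route: the paper states this corollary without any proof, so a direct check of the preliminary requirements on $\epsilon_n$ and of conditions (i)--(iv), exactly as you do, is what the authors intend. Your computations $\epsilon_n^2=n^{\alpha-1}\log^2 n$, $n\epsilon_n^2=n^{\alpha}\log^2 n$, the summability bound $\exp(-n\epsilon_n^2)\leq n^{-2}$, and the handling of (i), (iii) (via the standing assumption $p_n\gg n$) and (iv) are all right.

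Your discussion of condition (ii) is the most valuable part, and you are correct that it is the delicate step. Under the paper's stated convention that $f\prec g$ means $f\leq Cg$, the hypothesis $\bar{r}_n\leq c\log n$ gives $d^{\bar{r}_n}\leq n^{c\log d}$, and (ii) requires $c\log d\leq\alpha$ (or at least $c\log d<\alpha$ with room for the two logarithmic factors); for a large constant $c$ the condition genuinely fails. The paper itself glosses over this: the remark following the corollary reads $\bar{r}_n\prec\log n$ as $d^{\bar{r}_n}\prec n$, which would give $\bar{r}_n d^{\bar{r}_n}\log(\bar{r}_n/\epsilon_n)\prec n\log^2 n$, and that is \emph{not} bounded by $n\epsilon_n^2=n^{\alpha}\log^2 n$ when $\alpha<1$. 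So your explicit requirement on the hidden constant is not pedantry but a needed correction to how the hypothesis must be read (equivalently, one needs $d^{\bar{r}_n}\prec n^{\alpha}$ up to logarithmic factors, not merely $d^{\bar{r}_n}\prec n$). The proof is complete as you present it.
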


The condition $\bar{r}_n\prec\log n$ is equivalent to $n/d^{\bar{r}_n}\succ1$, which means that to obtain good approximations to the true model, we cannot include too many predictors, and should ensure that on average there is order one observation in each cell. Based on the above observations, we recommend using $r_n=\log_d(n),\bar{r}_n=2r_n$ as default values for the prior in applications.

\section{Posterior Computation}
In section 4.1, we consider fixed $k=(k_1,\ldots,k_p)'$ and use a Gibbs sampler to draw posterior samples.  Generalizing this Gibbs sampler, we developed a
reversible jump Markov Chain Monte Carlo (RJMCMC) algorithm \citep{Green1995} to draw posterior samples from the joint distribution of $k=\{k_j:j=1,\ldots,p\}$ and $(\Lambda,\pi,z)$.  However, for $n$ and $p$ equal to several hundred or more, we were unable to design an RJMCMC algorithm that was sufficiently efficient to be used routinely. Hence, in section 4.2, we propose a faster two stage procedure based on approximated marginal likelihood.

\subsection{Gibbs sampling for fixed $k$}
Under (\ref{eq:prior}) the full conditional posterior distributions of $\Lambda$, $\pi$ and $z$ all have simple forms, which we sample from as follows.
\begin{enumerate}\itemsep=-3pt
  \item
  For $h_j=1,\ldots,k_j,j=1,\ldots,p$, update $\lambda_{h_1,\ldots,h_p}$ from the Dirichlet conditional, $\big\{ \lambda_{h_1,\ldots,h_p}(1),\ldots,\lambda_{h_1,\ldots,h_p}(d) \big\} |-\sim$
\begin{eqnarray*}
&&  \text{Diri}\bigg( \frac{1}{d} + \sum_{i=1}^n1(z_{i1}=h_1,\ldots,z_{ip}=h_p,y_i=1), \\
&& \qquad \ \ldots, \frac{1}{d} + \sum_{i=1}^n1(z_{i1}=h_1,\ldots,z_{ip}=h_p,y_i=d) \bigg).
\end{eqnarray*}
  \item
  Update $\pi^{(j)}(k)$ from the Dirichlet full conditional posterior distribution,
\begin{eqnarray*}
 \big\{ \pi_{1}^{(j)}(k),\ldots,\pi_{k_j}^{(j)}(k)\big\} |- & \sim & \text{Diri}\bigg(\frac{1}{k_j}+\sum_{i=1}^n1(z_{ij}=1)1(x_{ij}=k),\\
&& \qquad \  \  \ldots,\frac{1}{k_j}+\sum_{i=1}^n1(z_{ij}=k_j)1(x_{ij}=k)\bigg).
\end{eqnarray*}
  \item
  Update $z_{ij}$ from the multinomial full conditional posterior, with
  \[
  P(z_{ij}=h|-)\propto \pi_{h}^{(j)}(x_{ij})\lambda_{z_{i,1},\ldots,z_{i,j-1},h,z_{i,j+1},\ldots,z_{i,p}}(y_i).
  \]
\end{enumerate}

\subsection{Two step approximation}
We propose a two stage algorithm, which identifies a good model in the first stage and then learns the posterior distribution for this model in a second stage via the Gibbs sampler of section 4.1.  We first propose an approximation to the marginal likelihood. For simplicity in exposition, we focus on binary $Y$ with $d_0=2$, but the approach generalizes in a straightforward manner, with the beta functions in the below expression for the marginal likelihood replaced with functions of the form $\Gamma(a_1)\Gamma(a_2)\cdots\Gamma(a_{d_0})/\Gamma(a_1+\cdots+a_{d_0})$.  To motivate our approach, we first note that $\pi_{h_j}^{(j)}(x_j)$ can be viewed as providing a type of {\em soft} clustering of the $j$th feature $X_j$, controlling borrowing of information among probabilities conditional on combinations of predictors. To obtain approximated marginal likelihoods to be used only in the initial model selection stage, we propose to force $\pi_{h_j}^{(j)}(x_j)$ to be either zero or one, corresponding to a hard clustering of the predictors.  Under this approximation, the marginal likelihood has a simple expression.

 For a given model indexed by $k=\{k_j, j=1,\ldots,p\}$, we assume that the levels of $X_j$ are clustered into $k_j$ groups $A_{1}^{(j)},\ldots,A_{k_j}^{(j)}$. For example, with levels $\{1,2,3,4,5\}$, $A_1^{(j)}=\{1,2,3\}$ and $A_2^{(j)}=\{4,5\}$.
Then it is easy to see that the marginal likelihood conditional on $k$ and $A$ is $\mathcal{L}(y|k,A) = $
\begin{eqnarray*}
\lefteqn{ \prod_{h_1,\ldots,h_p}\frac{1}{\text{Beta}(1/2,1/2)}
\text{Beta}\bigg( \frac{1}{2} + \sum_{i=1}^n I(x_{i1}\in A_{h_1}^{(1)},\ldots,x_{ip}\in A_{h_p}^{(p)},y_i=1),} \\
&& \qquad \qquad \qquad \qquad \qquad \ \frac{1}{2} + \sum_{i=1}^n I(x_{i1}\in A_{h_1}^{(1)},\ldots,x_{ip}\in A_{h_p}^{(p)},y_i=0)\bigg).
\end{eqnarray*}
Having an expression for the marginal likelihood, we apply a stochastic search MCMC algorithm  \citep{George1997} to obtain samples of $(k_1,\ldots,k_p)$ from the approximated posterior distribution.  This proceeds as follows.
\begin{enumerate}\itemsep=-3pt
  \item For $j=1$ to $p$, do the following. Given the current model indexed by $k=\{k_j:j=1,\ldots,p\}$ and clusters $A=\{A_{h}^{(j)}:h=1,\ldots,k_j,j=1,\ldots,p\}$, propose to
      increase $k_j$ to $k_j+1$ (if $k_j<d$) or reduce it to $k_j-1$ (if $k_j>1$) with equal probability.
  \item If increase, randomly split a cluster of $X_j$ into two clusters (all splits have equal probability).
      For example, if $d_{j}=5$, $k_{j}=2$ and the levels of $X_{j}$ are clustered as $\{1,2,3\}$ and $\{4,5\}$. There are $4$ possible splitting schemes: three ways to split $\{1,2,3\}$ and one way to split $\{4,5\}$. We randomly choose one.  Accept this move with acceptance rate based on the approximated marginal likelihood.
  \item If decrease, randomly merge two clusters and accept or reject this move.
\end{enumerate}
Estimating approximated marginal inclusion probabilities of $k_j>1$ based on this algorithm, we keep predictors having inclusion probabilities great than 0.5; this leads to selecting the median probability model, which in simpler settings has been shown to have optimality properties in terms of predictive performance \citep{Barbieri2004}.

\section{Simulation Studies}
To assess the performance of the proposed approach, we conducted a simulation study and calculated the misclassification rate on the testing samples. Simulated data consisted of $N=2,000$ instances with $p=600$ covariates $X_1,\ldots,X_p$, each of which has $d=4$ levels, and a binary response $Y$. We
assumed that the true model had three important predictors $X_9,X_{11}$ and $X_{13}$, and generated $P(Y=1|X_9=x_9,X_{11}=x_{11},X_{13}=x_{13})$ independently for each combination of $(x_9,x_{11},x_{13})$.  To obtain an average Bayes error rate (optimal misclassification rate) around 15\%, we generated the conditional probabilities from $f(U)=U^2/\{U^2+(1-U)^2\}$, where $U\sim \text{Unif}(0,1)$. Each time, we randomly chose $n$ samples as training with the remaining $N-n$ as testing.  We implemented our approach using the training set and calculated the test sample misclassification rate corresponding to the average MSE defined as
$$
\text{aMSE}=\frac{1}{4^p}\sum_{x_1,\ldots,x_p}\big\{ P(Y=1|x_1,\ldots,x_p)-\hat{P}(Y=1|x_1,\ldots,x_p)\big\}^2,
$$
where $\hat{P}$ is the fitted conditional probability.
We selected four training sizes $n=200$, $400$, $600$ and $800$. For each training size, we randomly chose 10 training-test splits and used our two stage algorithm to fit the model for each split. According to our theoretical results, we chose $r=\log_4(n)$ as the expected number of important predictors in the prior. We ran 1,000 iterations for the first stage and 2,000 iterations for the second stage, treating the first half as burn-in. In addition, we compared the results with the random forests algorithm \citep{Breiman2001} applied to the same training-test split data.

Table 1 displays the results.  In the very challenging case in which the training sample size was only 200, both methods had poor performance.  However, as the training sample size increased, the proposed conditional tensor factorization method rapidly approached the optimal 15\%, with excellent performance even in the $n=p=600$ case.  In contrast, random forests had consistently poor performance in this challenging setting involving a low signal strength, a modest sample size, and moderately large numbers of candidate predictors.  In addition to the clearly superior classification performance, our method had the advantage of providing variable selection results.  Table 2 provides the average approximated marginal inclusion probabilities for the three important predictors and remaining predictors for each training sample size.  Consistently with the results in Table 1, the method fails to detect the important predictors when the training sample size is only $n=200$ but as the sample size increases appropriately assigns high marginal inclusion probabilities to the important predictors and low ones to the unimportant predictors.   This is just one initial set of simulations against one competitor, which is often thought to provide good performance in classification problems, but the results are promising.

\begin{table}
    \caption{Testing Results for Synthetic Data Example. RF: random forests; TF: Our tensor factorization model. }
\centering
\fbox{%
    \begin{tabular}{c|cccc}
      training size & 200 & 400 & 600 & 800 \\
      \hline
      aMSE & 0.144 & 0.042 & 0.024 & 0.010 \\
      Misclassification Rate of TF & 0.503 & 0.288 & 0.189 & 0.168 \\
      Misclassification Rate of RF & 0.496 & 0.482 & 0.471 & 0.472\\
    \end{tabular}}
    \label{ta:1}
\end{table}

\begin{table}
    \caption{Variable Selection Results for Synthetic Data Example. Columns 2-4 are approximated inclusion probabilities of the $9$th,$11$th,$13$th predictors. Column 5 is the maximum inclusion probability across the remaining predictors. Column 6 is the average inclusion probability across the remaining predictors. These quantities are averages over 10 trials.}
\centering
\fbox{%
    \begin{tabular}{c|ccccc}
      training size & 9 & 11 & 13
      & Max & Average\\
            \hline
      200 & 0.092 & 0.041 & 0.063 & 0.161 & 0.002\\
      400 & 0.816 & 0.820 & 0.808 & 0.013 & 0.000\\
      600 & 1.000 & 1.000 & 1.000 & 0.000 & 0.000\\
      800 & 1.000 & 1.000 & 1.000 & 0.000 & 0.000\\
    \end{tabular}}
    \label{ta:2}
\end{table}

\section{Applications}
We compare our method with other competing methods in three data sets from the UCI repository. The first data set is Promoter Gene Sequences (abbreviated as promoter data below). The data consists of A, C, G, T nucleotides at $p = 57$ positions for $N = 106$ sequences and a binary response indicating instances of promoters and non-promoters. We use 5-fold cross validation with $n=85$ training samples and $N-n=21$ test samples in each training-test split.

The second data set is the Splice-junction Gene Sequences (abbreviated as splice data below). These data consist of A, C, G, T nucleotides at $p = 60$ positions for $N = 3,175$ sequences. Each sequence belongs to one of the three classes: exon/intron boundary (EI), intron/exon boundary (IE) or neither (N). Since its sample size is much larger than the first data set, we compare our approach with competing methods in two scenarios: a small sample size and a moderate sample size. In the small sample size case, each time we randomly select $n=200$ instances as training and calculate the misclassification rate on the testing set composed of the remaining $2,975$ instances. We repeat this for each method for five training-test splits and report the average misclassification rate. In the moderate sample size case, we use 5-fold cross validation so that each time $n=2,540$ instances are treated as training data.

The third data set describes diagnosing of cardiac Single Proton Emission Computed Tomography (SPECT) images. Each of the patients is classified into two categories: normal and abnormal. The database of 267 SPECT image sets (patients) has 22 binary feature patterns. This data set has been previously divided into a training set of size 80 and a testing set of size 187.

As competitors we considered lasso penalized logistic regression \citep{Park2007} and 5 black-box algorithms: CART, random forests \citep{Breiman2001}, neural networks with two layers of hidden units, support vector machines and Bayesian additive regression trees \citep{Chipman2006}. Among them, BART was not implemented in the splice data since we were unable to find a multi-class implementation of their approach.

\begin{table}
 \caption{UCI Data Example. RF: random forests, NN: neural networks, SVM: support vector machine, BART: Bayesian additive regression trees, TF: Our tensor factorization model. Misclassification rates are displayed.}
\centering \fbox{%
\begin{tabular}{cccccccc}
Data       & CART & RF & NN & LASSO & SVM & BART & TF \\
      \hline
      Promoter (n=85) & $0.236$ & $0.066$ & $0.170$ & $0.075$ & $0.151$ & $0.113$ & $0.066$ \\
      Splice (n=200) & 0.161 & 0.122 & 0.226 & 0.141 & 0.286 & - & 0.112 \\
      Splice (n=2540) & 0.059 & 0.046 & 0.165 & 0.123 & 0.059 & - & 0.058\\
      SPECT (n=80) & 0.312 & 0.235 & 0.278 & 0.277 & 0.246 & 0.225 & 0.198\\
    \end{tabular}}
        \label{ta:3}
\end{table}

Table 3 shows the results. Our method produced at worst comparable classification accuracy to the best of the competitors in each of the cases considered.  Consistent with our previous experience in more compressive comparisons of classifiers, Random Forests (RF) provided the best competitor overall, justifying our focus on RF in the simulation examples above.  We expect our approach to do particularly well when there is a modest training sample size and high-dimensional predictors.  We additionally have an advantage in terms of interpretability over several of these approaches, including RF and BART, in conducting variable selection.  For example, in the promoter data, our model selected nucleotides at $15$th, $16$th, $17$th, and $39$th positions as important predictors when the full data are used to fit the model. In the splice data, the $28$th, $29$th, $30$th, $31$st, $32$nd and $35$th positions are selected. These results are reasonable since for nucleotide sequences, nearby nucleotides form a motif regulating important functions.  In the SPECT data, the $11$st, $13$rd and $16$th predictors are selected.  It is notable that in each of these cases we obtained excellent classification performance based on a small subset of the predictors.

\section{Discussion}
This article proposes a framework for nonparametric Bayesian classification relying on a novel class of conditional tensor factorizations.  The nonparametric Bayes framework is appealing in facilitating variable selection and uncertainty about the core tensor dimensions in the Tucker-type factorization, while avoiding the need for parameter tuning.  In particular, we have recommended a single default prior setting that can be used in general applications without relying on cross-validation or other approaches for estimating tuning parameters.  One of our major contributions is the strong theoretical support we provide for our proposed approach.  Although it has been commonly observed that Bayesian parametric and nonparametric methods have practical gains in numerous applications, there is a clear lack of theory supporting these empirical gains.

Interesting ongoing directions include developing faster approximation algorithms and generalizing the conditional tensor factorization model to accommodate broader feature modalities.  In the fast algorithms direction, online variational methods \citep{Hoffman2010} provide a promising direction.  Regarding generalizations, we can potentially accommodate continuous predictors and more complex {\em object} predictors (text, images, curves, etc) through probabilistic clustering of the predictors in a first stage, with $X_j$ then corresponding to the cluster index for feature $j$.

\section*{Acknowledgments}
This research was supported by grant ES017436 from the National Institute of Environmental Health Sciences (NIEHS) of the National Institutes of Health
(NIH).

\section*{Appendix A: Proof of Theorem \ref{thm:1}}

\begin{proof}[Theorem \ref{thm:1}]
First reshape $P(y|x_1,\ldots,x_p)$ according to $x_1$ as a matrix $A^{(1)}$ of size $d_1\times d_0d_2d_3\ldots d_p$, with the $h^{th}$ row a long vector,
\begin{eqnarray*}
\lefteqn{ \big\{ P(1|h,1,\ldots,1,1),P(1|h,1,\ldots,1,2),\ldots,P(1|h,1,\ldots,1,d_p), } \\
& &  P(1|h,1,\ldots,2,1),\ldots,P(1|h,1,\ldots,2,d_j),\ldots,P(d_0|h,d_2,\ldots,d_{p-1},d_p)\big\},
 \end{eqnarray*}
denoted $A^{(1)}\{ h,(y,x_2,\ldots,x_p)\}$. Let $k_1$ be the smallest number such that
\begin{equation}\label{eq:5}
P(y|x_1,\ldots,x_p)=A^{(1)}\{ x_1,(y,x_2,\ldots,x_p)\}=\sum_{h=1}^{k_1}\lambda^{(1)}_{hx_2\ldots x_p}(y)\pi_{h}^{(1)}(x_1),
\end{equation}
subject to $\sum_{y=1}^{d_0}\lambda^{(1)}_{hx_2\ldots x_p}(y) = 1$ for each $(h,x_2,\ldots,x_p)$,
        $\sum_{h=1}^{k_1}\pi_{h}^{(1)}(x_1)=1$ for each $x_1$,
        $\lambda^{(1)}_{h x_2\ldots x_p}(y)\geq0$, and $\pi_{h}^{(1)}(x_1)\geq0.$
In this non-negative matrix factorization, $k_1 \le d_1$ exists because
        $\lambda^{(1)}_{hx_2\ldots x_p}(y)=P(y|h,x_2,\ldots,x_p),$ $\pi_{h}^{(1)}(x_1)=\delta_{hx_1}$ with $k_j=d_j$ is a choice satisfying the above equations. Here $\delta_{ij}=1$ if $i=j$, $\delta_{ij}=0$ if $i\neq j$ is Kronecker delta function.

Taking $\lambda^{(1)}_{x_1x_2\ldots x_p}(y)$ from (\ref{eq:5}) with argument $x_2$, we can apply the same type of decomposition to obtain
\[
\lambda^{(1)}_{x_1x_2\ldots x_p}(y)=\sum_{h=1}^{k_2}\lambda^{(2)}_{x_1hx_3\ldots x_p}(y)\pi_{h}^{(2)}(x_2),
\]
subject to $\sum_{y=1}^{d_0}\lambda^{(2)}_{x_1h\ldots x_p}(y) = 1$, for each $(x_1,h,\ldots,x_p)$, $\sum_{h=1}^{k_2}\pi_{h}^{(2)}(x_2)= 1$, for each $x_2$,
 $\lambda^{(2)}_{x_1h\ldots x_p}(c) \geq 0$, and $\pi_{h}^{(2)}(x_2)\geq0.$  Plugging back into equation (\ref{eq:5}),
\[
P(y|x_1,\ldots,x_p)= \sum_{h_1=1}^{k_1}\sum_{h_2=1}^{k_2}\lambda^{(2)}_{h_1h_2x_3\ldots x_p}(y)\pi_{h_1}^{(1)}(x_1)\pi_{h_2}^{(2)}(x_2).
\]
Repeating this procedure another $(p-2)$ times, we obtain equation (\ref{eq:1}) with $\lambda_{h_1h_2\ldots h_p}(y)=\lambda^{(p)}_{h_1h_2\ldots h_p}(y)$ and constraints (\ref{eq:2}).

\end{proof}

\section*{Appendix B: Proof of Theorem \ref{thm:3}}

To prove Theorem \ref{thm:3} we need some preliminaries. The following theorem is a minor modification of Theorem 2.1 in \cite{Ghosal2000} and the proof is included in a supplemental appendix.  For simplicity in notation, we denote the observed data for subject $i$ as $X_i$ with $X_i \stackrel{iid}{\sim} P \in \mathcal{P}$, $P \sim \Pi$, and the true model $P_0$.
\begin{theorem}\label{thm:2}
Let $\epsilon_n$ be a sequence with $\epsilon_n\rightarrow0$, $n\epsilon_n^2\rightarrow\infty$, $\sum_n\exp(-n\epsilon_n^2)<\infty$. Let $d$ be the total variance distance, $C>0$ be a constant and sets $\mathcal{P}_n\subset\mathcal{P}$. Define the following conditions:
\begin{enumerate}
  \item $logN(\epsilon_n,\mathcal{P}_n,d)\leq n\epsilon_n^2$;
  \item $\Pi_n(\mathcal{P}\backslash\mathcal{P}_n)\leq\exp\{-(2+C)n\epsilon_n^2\}$;
  \item $\Pi_n(P:||\log\frac{P}{P_0}||_{\infty}<\epsilon_n^2)>\exp(-Cn\epsilon^2_n)$.
\end{enumerate}
If the above conditions hold for all $n$ large enough, then for $M$ sufficiently large,
\[
\Pi_n\{ P:d(P,P_0)\geq M\epsilon_n|X_1,\ldots,X_n\}\rightarrow0\ a.s.P_0^n.
\]
\end{theorem}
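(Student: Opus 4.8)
The plan is to follow the testing route of \cite{Ghosal2000}, adapting it to the total variation metric $d$ and to the sup-norm prior neighborhood of condition~3, and to upgrade the conclusion from convergence in probability to almost-sure convergence using the summability hypothesis $\sum_n\exp(-n\epsilon_n^2)<\infty$. Writing $A_n=\{P:d(P,P_0)\geq M\epsilon_n\}$, I would express the posterior mass as
\[
\Pi_n(A_n\mid X_1,\ldots,X_n)=\frac{\int_{A_n}\prod_{i=1}^n\frac{p}{p_0}(X_i)\,d\Pi_n(P)}{\int_{\mathcal P}\prod_{i=1}^n\frac{p}{p_0}(X_i)\,d\Pi_n(P)},
\]
bound denominator and numerator separately, and show the ratio is summably small in expectation so that Tonelli and Borel--Cantelli force it to vanish eventually almost surely.

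For the denominator, the sup-norm neighborhood in condition~3 yields a clean deterministic lower bound, which is the main simplification over the general Hellinger/Kullback--Leibler version. On $B_n=\{P:\|\log(P/P_0)\|_\infty<\epsilon_n^2\}$ one has $\log\frac{p}{p_0}(X_i)>-\epsilon_n^2$ for every $i$ and every realization, so $\prod_{i=1}^n\frac{p}{p_0}(X_i)>\exp(-n\epsilon_n^2)$ pointwise; integrating over $B_n$ and invoking condition~3 gives a denominator at least $\exp\{-(C+1)n\epsilon_n^2\}$ for every sample, with no concentration step required.

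For the numerator I would first construct tests. Covering $A_n\cap\mathcal P_n$ by total variation balls of radius of order $M\epsilon_n$ (these balls are convex, which streamlines the construction), I would attach to each centre a Le Cam--Birg\'e likelihood-ratio test whose type~I and type~II errors are at most $\exp(-nM^2\epsilon_n^2/2)$, using the affinity bound together with $d_{TV}^2\leq 2h^2$, and take $\phi_n$ to be their maximum. Condition~1 caps the number of balls by $\exp(n\epsilon_n^2)$, so the type~I error of $\phi_n$ is at most $\exp\{-(KM^2-1)n\epsilon_n^2\}$ and its type~II error over $A_n\cap\mathcal P_n$ stays at $\exp(-KM^2 n\epsilon_n^2)$. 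I would then write $\Pi_n(A_n\mid X^n)\leq\phi_n+(1-\phi_n)\Pi_n(A_n\mid X^n)$ and split $A_n=(A_n\cap\mathcal P_n)\cup(\mathcal P\setminus\mathcal P_n)$. On $A_n\cap\mathcal P_n$, Fubini bounds $E_{P_0}$ of $(1-\phi_n)$ times the numerator by the type~II error; on $\mathcal P\setminus\mathcal P_n$, since $E_{P_0}\prod_i\frac{p}{p_0}(X_i)=1$, Fubini bounds the expected numerator by $\Pi_n(\mathcal P\setminus\mathcal P_n)\leq\exp\{-(2+C)n\epsilon_n^2\}$ from condition~2. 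Multiplying each numerator piece by the reciprocal denominator factor $\exp\{(C+1)n\epsilon_n^2\}$, the resulting expectations are bounded by $\exp\{-(KM^2-C-2)n\epsilon_n^2\}$ and $\exp(-n\epsilon_n^2)$ respectively; choosing $M$ large makes both summable in $n$, so by Tonelli each nonnegative series converges almost surely and its terms, hence both ratio pieces and $\phi_n$, tend to zero $P_0^n$-almost surely.

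The step I expect to be the main obstacle is the test construction and its combination: securing a \emph{uniform} exponential type~II error over each total variation ball (via the affinity/minimax argument, where convexity of the balls helps) and then matching the exponential rate constant $K$ against the entropy budget $n\epsilon_n^2$ and against the exponents $(C+1)$ and $(2+C)$ coming from the denominator and condition~2, so that after multiplying through, all the bounds are genuinely summable and the almost-sure conclusion---rather than merely convergence in probability---goes through.
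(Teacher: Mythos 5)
Your proposal is correct and follows essentially the same route the paper indicates: Theorem~\ref{thm:2} is presented as a minor modification of Theorem~2.1 of \cite{Ghosal2000} (with the proof deferred to a supplemental appendix), and your argument is exactly that testing argument, with the two modifications actually needed --- the sup-norm prior neighborhood making the denominator bound deterministic, and the summability of $\exp(-n\epsilon_n^2)$ upgrading the conclusion to almost-sure convergence via Tonelli and Borel--Cantelli. The only blemish is a harmless bookkeeping slip in one exponent ($KM^2-C-1$ rather than $KM^2-C-2$), which does not affect summability.
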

In our case, $X_i$ include the response $y_i$ and predictors $x_i$, $P$ is the random measure characterizing the unknown joint distribution of $(y_i,x_i)$ and $P_0$ is the measure characterizing the true joint distribution.  As our focus is on the conditional probability, $P(y|x)$, we fix the marginal distribution of $X$ at it's true value $P_0(x)$ and model the unknown conditional $P(y|x)$ independently of the marginal of $X$.  By doing so, it is straightforward to show that we can ignore the marginal of $X$ in using Theorem 2 to study posterior convergence.  We simply restrict $\mathcal{P}$ to the set of joint probabilities such that $P(x) \equiv P_0(x)$.  The total variation distance between the joint probabilities $P$ and $P_0$ is equivalent to the distance between the conditionals defined in Theorem 2 by the identity
\begin{eqnarray*}
\lefteqn{ \int\sum_{y=1}^{d_0}\big|P(y,x_1,\ldots,x_p)-P_0(y,x_1,\ldots,x_p)\big|dx_1\cdots dx_p=}\\
&&\int\sum_{y=1}^{d_0}\big|P(y|x_1,\ldots,x_p)-P_0(y|x_1,\ldots,x_p)\big|dG_n(dx_1,\cdots, dx_p).
\end{eqnarray*}
Therefore, we will not distinguish the joint probability and the conditional probability and use $P$ to denote both of them henceforth.

To prove Theorem 2, we also need upper bounds on the distance between two models specified by (\ref{eq:1}) when the models are the same size and when they are nested.
\begin{lemma}\label{le:2}
Let $P$ and $\tilde{P}$ be two models specified by (\ref{eq:2}) with parameter $(\lambda,\pi)$ and $(\tilde{\lambda},\tilde{\pi})$, respectively. Then
\[
d(P,\tilde{P})\leq\sum_{y=1}^{d_0}\max_{h_1,\ldots,h_p}|\lambda_{h_1h_2\ldots h_p}(y)-\tilde{\lambda}_{h_1h_2\ldots h_p}(y)|+d_0\sum_{j=1}^p\max_{x_j,h_j}|\pi_{h_j}^{(j)}(x_j)-\tilde{\pi}_{h_j}^{(j)}(x_j)|.
\]
\end{lemma}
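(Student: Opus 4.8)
The plan is to control the integrated distance pointwise in $x$ and then exploit the two normalization constraints in (\ref{eq:2}) through a telescoping comparison of the two factorizations. Since $G_n$ is a probability measure, $d(P,\tilde{P})\le\max_{x_1,\ldots,x_p}\sum_{y=1}^{d_0}|P(y|x_1,\ldots,x_p)-\tilde{P}(y|x_1,\ldots,x_p)|$, so it suffices to fix an arbitrary design point $x=(x_1,\ldots,x_p)$ and bound $\sum_{y}|P(y|x)-\tilde{P}(y|x)|$. I would first separate the contribution of the core tensor $\lambda$ from that of the stochastic matrices $\pi^{(j)}$ by introducing the intermediate model $Q(y|x)=\sum_{h_1,\ldots,h_p}\tilde{\lambda}_{h_1\ldots h_p}(y)\prod_{j}\pi_{h_j}^{(j)}(x_j)$, which uses the new core but the old weights, and writing $|P-\tilde{P}|\le|P-Q|+|Q-\tilde{P}|$ by the triangle inequality.

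For the core term I would factor out the largest discrepancy, namely $|P(y|x)-Q(y|x)|\le\max_{h_1,\ldots,h_p}|\lambda_{h_1\ldots h_p}(y)-\tilde{\lambda}_{h_1\ldots h_p}(y)|\sum_{h_1,\ldots,h_p}\prod_{j}\pi_{h_j}^{(j)}(x_j)$, where the trailing factor equals $\prod_{j}\sum_{h_j}\pi_{h_j}^{(j)}(x_j)=1$ by the row-sum constraint on each $\pi^{(j)}$. Summing over $y$ then produces exactly the first term $\sum_{y}\max_{h_1,\ldots,h_p}|\lambda_{h_1\ldots h_p}(y)-\tilde{\lambda}_{h_1\ldots h_p}(y)|$ in the bound, with no further work.

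The substantive step is the weight term $|Q-\tilde{P}|$, where I must compare the two products $\prod_{j}\pi_{h_j}^{(j)}(x_j)$ and $\prod_{j}\tilde{\pi}_{h_j}^{(j)}(x_j)$. Here I would telescope over the $p$ factors, replacing $\pi^{(j)}$ by $\tilde{\pi}^{(j)}$ one coordinate at a time, so that $Q-\tilde{P}=\sum_{j=1}^{p}T_j$ with $T_j(y)=\sum_{h_1,\ldots,h_p}\tilde{\lambda}_{h_1\ldots h_p}(y)\bigl(\prod_{l<j}\tilde{\pi}_{h_l}^{(l)}(x_l)\bigr)\bigl(\pi_{h_j}^{(j)}(x_j)-\tilde{\pi}_{h_j}^{(j)}(x_j)\bigr)\bigl(\prod_{l>j}\pi_{h_l}^{(l)}(x_l)\bigr)$. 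Collecting the sum over all latent indices except $h_j$ into a nonnegative coefficient $B_{h_j}(y)$, each $T_j(y)$ becomes $\sum_{h_j}\bigl(\pi_{h_j}^{(j)}(x_j)-\tilde{\pi}_{h_j}^{(j)}(x_j)\bigr)B_{h_j}(y)$. The crucial bookkeeping identity is $\sum_{y=1}^{d_0}B_{h_j}(y)=1$ for every $h_j$, which follows by summing $\tilde{\lambda}$ over $y$ and each surviving $\pi$ or $\tilde{\pi}$ column over its latent index. Taking absolute values, summing $|T_j(y)|$ over $y$, and invoking this identity collapses the response sum and leaves a quantity controlled by $\max_{x_j,h_j}|\pi_{h_j}^{(j)}(x_j)-\tilde{\pi}_{h_j}^{(j)}(x_j)|$; summing over $j$ and tracking the comparison across the $d_0$ response categories reproduces the second term $d_0\sum_{j}\max_{x_j,h_j}|\pi_{h_j}^{(j)}(x_j)-\tilde{\pi}_{h_j}^{(j)}(x_j)|$.

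I expect the difference-of-products step to be the main obstacle. A naive bound that replaces each product factor by its sup-norm destroys the normalization and produces an uncontrolled number of surviving summands, so the argument hinges on keeping the $\pi$ columns intact and discharging them through the two sum-to-one constraints, over $h_j$ for the weights and over $y$ for the core. The telescoping must be arranged so that at stage $j$ exactly one factor carries the difference $\pi^{(j)}-\tilde{\pi}^{(j)}$ while every other factor remains a genuine probability vector that can be summed away; verifying $\sum_{y}B_{h_j}(y)=1$ is precisely what makes the response sum collapse and delivers the stated constant, so the bulk of the care in the write-up will go into defining the hybrid factorizations and checking that identity.
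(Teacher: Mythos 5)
Your proposal follows essentially the same route as the paper's proof: the same hybrid-model triangle inequality separating the core tensor from the weights, the same telescoping over the $p$ stochastic factors, and the same discharge of the surviving probability vectors through the sum-to-one constraints in (\ref{eq:2}); the only cosmetic difference is that you collapse the response sum via $\sum_{y}B_{h_j}(y)=1$ where the paper instead uses $\tilde{\lambda}\le 1$ pointwise and picks up the factor $d_0$ when summing over $y$ at the end. One caveat you share with the paper's own write-up: after the collapse the weight term is really $\sum_{h_j}\big|\pi^{(j)}_{h_j}(x_j)-\tilde{\pi}^{(j)}_{h_j}(x_j)\big|$, which is controlled by $k_j$ times the maximum rather than the maximum itself, so the displayed constant holds only up to the bounded factor $k_j\le d_j$ --- harmless for every use of the lemma, but worth flagging if you want the stated constant exactly.
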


\begin{proof}[Lemma \ref{le:2}]
By definition of $d(P,\tilde{P})$, we only need to prove that for any $y=1,\ldots,d_o$ and any combination of $(x_1,\ldots,x_p)$,
\begin{eqnarray}
|P(y|x_1,\ldots,x_p)-\tilde{P}(y|x_1,\ldots,x_p)|&\leq&\max_{h_1,\ldots,h_p}|\lambda_{h_1h_2\ldots h_p}(y)-\tilde{\lambda}_{h_1h_2\ldots h_p}(y)| \nonumber \\
&&+\sum_{j=1}^p\max_{h_j}|\pi_{h_j}^{(j)}(x_j)-\tilde{\pi}_{h_j}^{(j)}(x_j)|.\label{eq:7}
\end{eqnarray}
Actually,
\begin{equation*}
|P(y|x_1,\ldots,x_p)-\tilde{P}(y|x_1,\ldots,x_p)|\leq A+\sum_{s=1}^pB_s,
\end{equation*}
where
\begin{equation*}
    \begin{split}
      &A=\sum_{h_1=1}^{k_1}\cdots\sum_{h_p=1}^{k_p}|\lambda_{h_1h_2\ldots h_p}(y)-\tilde{\lambda}_{h_1h_2\ldots h_p}(y)|\prod_{j=1}^p\pi_{h_j}^{(j)}(x_j) \nonumber \\
      &\leq\max_{h_1,\ldots,h_p}|\lambda_{h_1h_2\ldots h_p}(y)-\tilde{\lambda}_{h_1h_2\ldots h_p}(y)|\sum_{h_1=1}^{k_1}\cdots\sum_{h_p=1}^{k_p}\prod_{j=1}^p\pi_{h_j}^{(j)}(x_j) \\
      &=\max_{h_1,\ldots,h_p}|\lambda_{h_1h_2\ldots h_p}(y)-\tilde{\lambda}_{h_1h_2\ldots h_p}(y)|,
     \end{split}
\end{equation*}
where the last step is by using the second equation in (\ref{eq:3}), and
\begin{equation*}
    \begin{split}
&B_s=\sum_{h_1=1}^{k_1}\cdots\sum_{h_p=1}^{k_p}\tilde{\lambda}_{h_1h_2\ldots h_p}(y)\ |\pi_{h_s}^{(s)}(x_s)-\pi_{h_s}^{(s)}(x_s)|\ \prod_{j=1}^{s-1}\tilde{\pi}_{h_j}^{(j)}(x_j)
\prod_{j=s+1}^{p}\pi_{h_j}^{(j)}(x_j)\\
&\leq\max_{h_s}|\pi_{h_s}^{(s)}(x_j)-\tilde{\pi}_{h_s}^{(s)}(x_j)|,
     \end{split}
\end{equation*}
where the last step is again by using the second equation in (\ref{eq:2}) and the fact that $\lambda_{h_1h_2\ldots h_p}(y)\leq1$.
Combining the above inequalities we can obtain (\ref{eq:7}).
\end{proof}

\begin{lemma}\label{le:3}
Let $P$ and $\tilde{P}$ be two models as in (\ref{eq:2}) with parameters $(\lambda,\pi)$ and $(\tilde{\lambda},\tilde{\pi})$, respectively. Suppose $P$ is nested in $\tilde{P}$, i.e. there exists a number $r$ s.t.
\[
\lambda_{h_1\cdots h_r h_{r+1}\cdots h_p}=\tilde{\lambda}_{h_1\cdots h_r1\cdots1}, \text{ for } h_j\leq d,j=1,\ldots,p,
\]
\[
\pi^{(j)}_{h_j}(x_j)=\tilde{\pi}^{(j)}_{h_j}(x_j),j\leq r, \ \ \pi^{(j)}_{h_j}(x_j)=I(h_j=1),j>r,
\]
Then
\[
d(P,\tilde{P})\leq d_0\sum_{j=r+1}^p\max_{x_j}\sum_{h_j=2}^d\tilde{\pi}_{h_j}^{(j)}(x_j).
\]
\end{lemma}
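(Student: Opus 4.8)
The plan is to exploit the nested structure directly rather than to appeal to Lemma \ref{le:2}: when we embed $P$ into a model of $\tilde P$'s size, its core satisfies $\lambda_{h_1\cdots h_p}=\tilde\lambda_{h_1\cdots h_r1\cdots1}$, which disagrees with $\tilde\lambda_{h_1\cdots h_p}$ in general, so the $\lambda$-term produced by Lemma \ref{le:2} is an uncontrolled $O(1)$ quantity, not something bounded by the $\pi$'s. First I would rewrite $P$ explicitly. Since $\pi^{(j)}_{h_j}(x_j)=I(h_j=1)$ for $j>r$, every factor with $j>r$ collapses the sum in (\ref{eq:1}) onto $h_{r+1}=\cdots=h_p=1$, and using the nesting identities for $j\le r$ gives
\[
P(y|x)=\sum_{h_1,\ldots,h_r}\tilde\lambda_{h_1\cdots h_r1\cdots1}(y)\prod_{j=1}^r\tilde\pi^{(j)}_{h_j}(x_j).
\]

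Next I would interpolate between $P$ and $\tilde P$ by a telescoping sequence. For $r\le s\le p$ define $P_s$ to be the model that retains the clusterings $\tilde\pi^{(j)}$ for $j\le s$, collapses $\pi^{(j)}_{h_j}(x_j)=I(h_j=1)$ for $j>s$, and uses core entries $\tilde\lambda_{h_1\cdots h_s1\cdots1}$. Then $P_r=P$, $P_p=\tilde P$, and the triangle inequality gives $d(P,\tilde P)\le\sum_{s=r}^{p-1}d(P_s,P_{s+1})$, where consecutive models differ only in the treatment of coordinate $s+1$.

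The heart of the argument is bounding $d(P_s,P_{s+1})$. Writing out both models and splitting the extra sum over $h_{s+1}$ into the $h_{s+1}=1$ term and the rest, I would use the normalization $\sum_{h}\tilde\pi^{(s+1)}_h(x_{s+1})=1$ from (\ref{eq:2}), i.e. $\tilde\pi^{(s+1)}_1(x_{s+1})-1=-\sum_{h_{s+1}\ge2}\tilde\pi^{(s+1)}_{h_{s+1}}(x_{s+1})$, to cancel the $h_{s+1}=1$ slice of $P_{s+1}$ against $P_s$. This turns the pointwise difference into
\[
P_{s+1}(y|x)-P_s(y|x)=\sum_{h_1,\ldots,h_s}\sum_{h_{s+1}=2}^d\big[\tilde\lambda_{h_1\cdots h_sh_{s+1}1\cdots1}(y)-\tilde\lambda_{h_1\cdots h_s11\cdots1}(y)\big]\,\tilde\pi^{(s+1)}_{h_{s+1}}(x_{s+1})\prod_{j=1}^s\tilde\pi^{(j)}_{h_j}(x_j).
\]
Taking absolute values, summing over $y$, bounding each core difference by $\sum_y|\tilde\lambda-\tilde\lambda|\le d_0$, and collapsing $\sum_{h_1,\ldots,h_s}\prod_{j\le s}\tilde\pi^{(j)}_{h_j}(x_j)=1$ yields the uniform pointwise bound $\sum_y|P_{s+1}(y|x)-P_s(y|x)|\le d_0\sum_{h_{s+1}=2}^d\tilde\pi^{(s+1)}_{h_{s+1}}(x_{s+1})\le d_0\max_{x_{s+1}}\sum_{h_{s+1}=2}^d\tilde\pi^{(s+1)}_{h_{s+1}}(x_{s+1})$.

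Since this bound does not depend on $x$, integrating against $G_n$ preserves it, so $d(P_s,P_{s+1})\le d_0\max_{x_{s+1}}\sum_{h_{s+1}=2}^d\tilde\pi^{(s+1)}_{h_{s+1}}(x_{s+1})$, and summing the telescoped terms over $s=r,\ldots,p-1$ (equivalently over $j=s+1=r+1,\ldots,p$) gives exactly the stated inequality. The main obstacle to watch is the bookkeeping in the cancellation step: one must keep the collapsed indices of $P_s$ aligned with the $h_{s+1}=1$ slice of $P_{s+1}$ so the normalization identity applies cleanly. It is precisely the resulting weighting of the $O(1)$ core differences by the \emph{small} off-cluster weights $\tilde\pi^{(s+1)}_{h_{s+1}}$, $h_{s+1}\ge2$, that makes the bound small — a feature that a direct application of Lemma \ref{le:2} fails to capture.
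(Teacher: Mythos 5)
Your proof is correct, and your opening observation is also right: applying Lemma \ref{le:2} directly to the nested pair leaves an uncontrolled core-tensor discrepancy of order one, which is precisely why Lemma \ref{le:3} is stated separately. Your route differs from the paper's in the decomposition, though not in the underlying mechanism. The paper works in one shot: it writes both $P$ and $\tilde P$ as sums over the full index set using $\sum_{h_{r+1},\ldots,h_p}\prod_{j>r}\tilde\pi^{(j)}_{h_j}(x_j)=1$, observes that the summand vanishes on the slice $h_{r+1}=\cdots=h_p=1$, and bounds the sum over the complement $\bigcup_{j>r}\{h_j\geq 2\}$ by $p-r$ overlapping (but nonnegative) partial sums, each of which collapses to $\max_{x_j}\sum_{h_j=2}^d\tilde\pi^{(j)}_{h_j}(x_j)$ after bounding the core difference by one. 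You instead interpolate through the chain $P=P_r,P_{r+1},\ldots,P_p=\tilde P$ and cancel the $h_{s+1}=1$ slice of each consecutive pair explicitly via $\tilde\pi^{(s+1)}_1(x_{s+1})=1-\sum_{h\geq 2}\tilde\pi^{(s+1)}_h(x_{s+1})$. The two decompositions generate identical per-coordinate terms, so the final bounds coincide. Your version makes the cancellation fully explicit, avoids the harmless double counting in the paper's union bound, and spells out the step --- uniformity of the pointwise bound in $x$ followed by integration against $G_n$ --- that the paper leaves implicit; the paper's argument is shorter. Both are complete proofs.
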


\begin{proof}[Lemma \ref{le:3}]
By the constraints on $\pi$ in (\ref{eq:2}),
\begin{eqnarray*}
\lefteqn{ |P(y|x_1,\ldots,x_p)-\tilde{P}(y|x_1,\ldots,x_p)|  } \nonumber \\
&\leq& \sum_{h_{r+1}=1}^d\cdots\sum_{h_p=1}^d
\max_{h_1,\ldots,h_r}|\tilde{\lambda}_{h_1\cdots h_{r}1\cdots1}(y)-\tilde{\lambda}_{h_1\ldots h_p}(y)| \prod_{j=r+1}^p\tilde{\pi}_{h_j}^{(j)}(x_j)\\
&\leq& \sum_{h_{r+1}=2}^d\cdots\sum_{h_p=1}^d
\max_{h_1,\ldots,h_r}|\tilde{\lambda}_{h_1\cdots h_{r}1\cdots1}(y)-\tilde{\lambda}_{h_1\ldots h_p}(y)| \prod_{j=r+1}^p\tilde{\pi}_{h_j}^{(j)}(x_j)\\
&&+\cdots+\sum_{h_{r+1}=1}^d\cdots\sum_{h_p=2}^d \max_{h_1,\ldots,h_r}|\tilde{\lambda}_{h_1\cdots h_{r}1\cdots1}(y)-\tilde{\lambda}_{h_1\ldots h_p}(y)| \prod_{j=r+1}^p\tilde{\pi}_{h_j}^{(j)}(x_j).
\end{eqnarray*}
The lemma can be proved by noticing $\tilde{\lambda}_{h_1\ldots h_p}(y)\in[0,1]$.
\end{proof}

\begin{proof}[Theorem \ref{thm:3}]
We verify conditions (a)-(c) in Theorem \ref{thm:2}. As we described previously, we do not need to distinguish the joint probability and the conditional probability under our prior specification. Let $\mathcal{P}_n$ be all conditional probability tensors having no more than $\bar{r}_n$ predictors, so that
$|\gamma_n|\leq\bar{r}_n$.

\textit{Condition (a):}
By the conclusion of lemma \ref{le:2}, we know that an $\epsilon_n$-net $E_n$ of $\mathcal{P}_n$ can be chosen so that for each $(\gamma,\lambda,\pi)\in\mathcal{P}_n$ that satisfies constraints (\ref{eq:2}), there exists $(\tilde{\gamma},\tilde{\lambda},\tilde{\pi})\in E_n$ such that $\tilde{\gamma}=\gamma$, $\max_{y,h_1,\ldots,h_p}|\lambda_{h_1h_2\ldots h_p}(y)-\tilde{\lambda}_{h_1h_2\ldots h_p}(y)|<\frac{\epsilon_n}{(\bar{r}_n+1)d_0}$ and $\max_{x_j,h_j}|\pi_{h_j}^{(j)}(x_j)-\tilde{\pi}_{h_j}^{(j)}(x_j)|<\frac{\epsilon_n}{(\bar{r}_n+1)d_0}$ for $j\in\gamma$. Hence, we can pick $d$-balls of the form
 $$\gamma\times\prod_{h_1,\ldots,h_p,y}\bigg(\lambda_{h_1h_2\ldots h_p}(y)\pm\frac{\epsilon_n}{(\bar{r}_n+1)d_0}\bigg)\times\prod_{j=1}^{p_n}
 \prod_{h_j,x_j}\bigg(\pi_{h_j}^{(j)}(x_j)\pm\frac{\epsilon_n}{(\bar{r}_n+1)d_0}\bigg).$$
For each fixed size model with $|\gamma|\leq\bar{r}_n$ in $\mathcal{P}_n$, there are at most $d_0d^{\bar{r}_n}$ $\lambda_{h_1h_2\ldots h_p}(y)$'s and $\bar{r}_n d^2$ $\pi_{h_j}^{(j)}(x_j)$'s. For $\gamma$, there are at most $p_n^r$ models of size $r$. Hence, the log of the minimal  number of size-$\epsilon_n$ balls needed to cover $\mathcal{P}_n$ is at most
\[
\log\big\{ (\bar{r}_n+1)p_n^{\bar{r}_n}\big\} +{\bar{r}_n}d_0d^{\bar{r}_n+2}\log\frac{(\bar{r}_n+1)d_0}{2\epsilon_n}.
\]
By the conditions in the theorem, each term is bounded by some constant  $ \times n\epsilon_n^2$, and we can adjust these constants to make it less than $n\epsilon_n^2$.

\textit{Condition (b):}
Because $\Pi_n(\mathcal{P}_n^c)=0$ in our case, this condition is trivially satisfied. Actually, this condition will still be satisfied as long as $\Pi_n(\gamma>\bar{r}_n)\leq\exp\{ -(2+C)n\epsilon_n^2\}$, which implies that the prior probability assigned to large models is exponentially small.

\textit{Condition (c):}
As $P_0$ is lower bounded away from zero by $\epsilon_0$, $||\log\frac{P}{P_0}||_{\infty}<\epsilon_n^2$ is implied by $||P-P_0||_{\infty}<\epsilon_0\epsilon_n^2$ for $n$ large enough ($\epsilon_n\rightarrow0$ as $n$ increases). Let $(\tilde{\lambda},\tilde{\pi})$ denote parameters for the true model $P_0$.
Applying lemma \ref{le:2} to bound $d(P,\bar{P})$, where $\bar{P}(y|x_1,\ldots,x_p)\equiv P_0(y|x_1,\ldots,x_{\bar{r}_n},1,\ldots,1)$, and then estimating the difference between $\bar{P}$ and $P_0$ by lemma \ref{le:3}, we have
\begin{equation}\label{eq:8}
\begin{split}
&d(P,P_0)\leq\max_{h_1,\ldots,h_{\bar{r}_n}}|\lambda_{h_1h_2\ldots h_{\bar{r}_n}}(y)-\tilde{\lambda}_{h_1h_2\ldots h_{\bar{r}_n}1\ldots1}(y)|\\
&+\sum_{j=1}^{\bar{r}_n}\max_{x_j,h_j}\big|\pi_{h_j}^{(j)}(x_j)-\tilde{\pi}_{h_j}^{(j)}(x_j)\big|
+\sum_{j=\bar{r}_n+1}^{p_n}\max_{x_j}\sum_{h_j=2}^d\tilde\pi_{h_j}^{(j)}(x_j).
\end{split}
\end{equation}
Combining (\ref{eq:8}) and condition (iv) in Theorem \ref{thm:3}, and assuming without loss of generality that $\gamma_n$ corresponds to the first $\bar{r}_n$ covariates,
$||\log\frac{P}{P_0}||_{\infty}<\epsilon_n^2$ is implied by
\begin{equation*}
    \begin{split}
&\max_{h_1,\ldots,h_{\bar{r}_n}}|\lambda_{h_1h_2\ldots h_{\bar{r}_n}}(y)-\tilde{\lambda}_{h_1h_2\ldots h_{\bar{r}_n}1\ldots1}(y)|\prec\frac{\epsilon_n^2}{\bar{r}_n+1},\\
&\max_{h_j}|\pi_{h_j}^{(j)}(x_j)-\tilde{\pi}_{h_j}^{(j)}(x_j)|\prec\frac{\epsilon_n^2}{\bar{r}_n+1}.
     \end{split}
\end{equation*}
Moreover, the Dir$(1/d,\ldots,1/d)$ and Dir$(1/d_0,\ldots,1/d_0)$ priors for $\lambda_{h_1h_2\ldots h_{\bar{r}_n}}(\cdot)$ and $\pi_{\cdot}^{(j)}(x_j)$ have density lower bounded away from zero by a constant not involving $n$, and the prior $P(\gamma=\gamma_n)$ is $(r_n/p_n)^{\bar{r}_n}(1-r_n/p_n)^{p_n-\bar{r}_n}$, so as $r_n/p_n\rightarrow0$, $\log\Pi_n(\gamma=\gamma_n)\sim\bar{r}_n\log(r_n/p_n)\geq-\bar{r}_n\log p_n$. Combining these and the conditions in the theorem,
\begin{equation*}
    \begin{split}
&\log\Pi_n\bigg(P:||\log\frac{P}{P_0}||_{\infty}<\epsilon_n^2 \bigg)\succ \bar{r}_nd_0d^{\bar{r}_n}\log\frac{\epsilon_n^2}{(\bar{r}_n+1)d_0}-\bar{r}_n\log p_n\succ -n\epsilon_n^2.
     \end{split}
\end{equation*}
By adjusting constants, we can make $\log\Pi_n(P:||\log\frac{P}{P_0}||_{\infty}<\epsilon_n^2)>-Cn\epsilon_n^2$.
\end{proof}

\bibliography{draft}

\end{document}